\newtheorem{thm}{Theorem}[section]
\newtheorem{asm}{Assumption}[section]
\newtheorem{rem}{Remark}
\title{Generalized Asynchronous Event-Triggered Measurement and Control for Non-Linear Systems}
\author{Daniel A. Williams$^{1}$, Airlie Chapman$^{2}$, Chris Manzie$^{1}$  
\thanks{$^{1}$Daniel A. Williams and Chris Manzie are with the Department of Electrical Engineering, The University of Melbourne, 3052 Parkville, Australia.}%
\thanks{$^{2}$Airlie Chapman is with the Department of Mechatronic Engineering,
        The University of Melbourne, 3052 Parkville, Australia.}%
}
\date{}
\begin{document}
\maketitle
\begin{abstract}                
With the increasing ubiquity of networked control systems, various strategies for sampling constituent subsystems' outputs have emerged. 
In contrast with periodic sampling, event-triggered control provides a way to efficiently sample a subsystem and conserve network resource usage, by triggering an update only when a state-dependent error threshold is satisfied.
Herein we describe a novel scheme for asynchronous event-triggered measurement and control (ETC) of a nonlinear plant using sampler subsystems with hybrid dynamics.
We extend existing ETC literature by adopting a more general representation of the sampler subsystem dynamics that do not require trigger periodicity or simultaneity,
thus accommodating different sampling schemes for both synchronous and asynchronous ETC applications.
We ensure that the plant and controller trigger rules are not susceptible to Zeno behavior by employing auxiliary timer variables in conjunction with state-dependent error thresholds.
We conclude with a numerical example in order to illustrate important practical considerations when applying such schemes.
\end{abstract}

\section{Introduction}

\begin{figure*}[t]
    \centering
    \includegraphics[width=1.45\columnwidth]{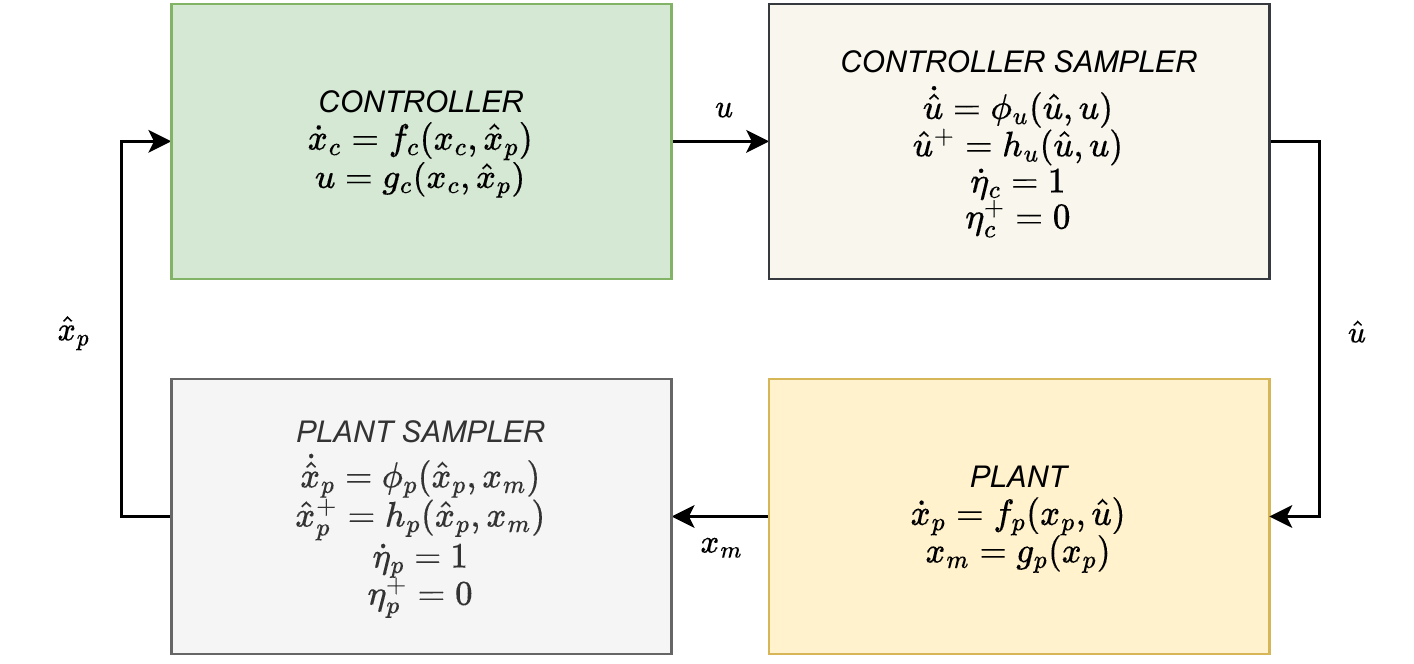}
    \caption{Overview of Proposed Event-Triggered Control Scheme.}
    \label{fig:framework_noidj}
\end{figure*}

The advent of fast and reliable communication channels has permitted the ongoing development of networked control systems.
A central problem in networked control surrounds the scheduling of sample updates for both the plant and controller subsystems \cite{tabuada_event-triggered_2007}.
Traditional fixed sampling rate approaches require the rate to be sufficiently high in order to minimize the effect of sampling errors, however this can generate redundant samples or incur transmission costs during periods of reasonably steady outputs from the plant or controller.
Alternatively, event-triggered transmission seeks a trade-off between sampling error and transmission frequency by triggering samples only when the sampling error exceeds a (potentially state-dependent) threshold (\cite{astrom_event_2008}).

Determining appropriate conditions for sample triggering is non-trivial, however a growing body of literature has emerged on the topic (cf. \cite{tabuada_event-triggered_2007,heemels_introduction_2012,postoyan_framework_2015,sanfelice_hybrid_2020,maass_state_2021}).
In \cite{tabuada_event-triggered_2007} the author introduces a state-dependent error threshold for triggering samples of continuous-time control systems, and conducts detailed analysis of such schemes when applied to linear and nonlinear control systems. 
Building on this, \cite{postoyan_framework_2015} generalizes event-triggered control by devising a hybrid system formulation for a non-linear plant and controller accommodating the use of auxiliary variables in sampler triggering.
The authors then derive sufficient conditions for guaranteeing the existence of a minimum inter-trigger interval and global (pre)-asymptotic stability of the origin for different applications of the ETC framework.
A limitation of both works however is an implicit requirement for simultaneous updates of both the plant sampler and controller sampler.

To overcome this we may instead consider asynchronous ETC, which does not require simultaneous state sampling \cite{mazo_asynchronous_2014,fu_periodic_2016,li_asynchronous_2021}.
Such schemes are relevant for scenarios wherein the plant and controller are not co-located, 
particularly when a human supervises an autonomous robotic system, e.g. for swarm foraging \cite{nam_predicting_2017}
and search and rescue \cite{drew_multi-agent_2021}.
In \cite{abdelrahim_robust_2017} the authors propose an asynchronous ETC scheme that prevents Zeno behavior by enforcing a minimum time between consecutive triggers (a strategy also followed in \cite{dolk_event-triggered_2017,scheres_distributed_2022}), however only input-to-state stability and $\mathcal{L}_p$ stability are demonstrated, and non-ZOH sampling is not considered.
In \cite{scheres_distributed_2022} the authors present conditions for ensuring dissipativity of the closed-loop system, while in \cite{dolk_event-triggered_2017} the authors derive a stronger result involving global asymptotic stability, however these works again rely on ZOH sampling.
Conversely, while \cite{wang_unifying_2020} present an ETC scheme that permits non-ZOH sampling methods and demonstrates uniform global asymptotic stability of the proposed system, the scheme is synchronous.

To address these gaps, we propose a new scheme for asynchronous ETC that permits non-ZOH sampling, exhibits uniform global asymptotic stability, and prevents the occurrence of Zeno behavior in triggers.
While we use similar notation and reasoning to \cite{wang_unifying_2020} in order to develop our scheme, we decompose the closed-loop system spatially into plant-related and controller-related subsystems as opposed to aggregating states by type (i.e. error and non-error variables).
This therefore allows consideration of asynchronous triggering schemes based only on information locally available to each sampler, and results in a stronger stability result than those in \cite{dolk_event-triggered_2017} and \cite{scheres_distributed_2022}.

\section{Preliminaries}\label{sec:prels}

Define $\mathbb{R}_{\geq 0}$ as the set of non-negative real numbers and $\mathbb{Z}_{\geq 0}$ as the set of non-negative integers.
We represent the identity function as $\mathbb{I}$.
A function $\beta:\mathbb{R}_{\geq 0}\rightarrow\mathbb{R}_{\geq 0}$ is of class $\mathcal{K}_{\infty}$ if it is a continuous function, zero at the origin, strictly increasing and unbounded.
We denote the distance of a vector $x\in\mathbb{R}^n$ from a set $\mathcal{A}\subset\mathbb{R}_{\geq 0}$ as $|x|_\mathcal{A}=\inf\{|x-y|:y\in A \},$ where $|\cdot|$ is the Euclidean norm.
A function $h:\mathbb{R}^n\rightarrow\mathbb{R}^m$ is proper if $|h(x)|\rightarrow\infty$ as $|x|\rightarrow\infty$.

Given a function $f:\mathbb{R}^n\rightarrow\mathbb{R}^m$ and a set $\mathcal{M}\subset\mathbb{R}^m$, let $f^{-1}(\mathcal{M})={x:f(x)\in\mathcal{M}}$.
When $m=1$, $\mathcal{M}\in\mathbb{R}$ and we define $f^{-1}_{\leq}(\mathcal{M})=\{x:f(x)\leq M\}$.

The tangent cone $T_S(x)$ to a set $S\subset\mathbb{R}^n$ at a point $x\in\mathbb{R}^n$ is the set 
$\{w\in\mathbb{R}^n: w = \lim_{i\rightarrow\infty}\frac{x_i-x}{\tau_i}\}$
where $x_i\in S$ is such that $
\lim_{i\rightarrow\infty}x_i = x$ and $\tau_i>0$ is such that $\lim_{i\rightarrow\infty}\tau_i = 0$.

The generalized Clarke directional derivative of a locally Lipschitz function $U:\mathbb{R}^n\rightarrow\mathbb{R}$ 
at $x\in\mathbb{R}^n$ in the direction of $v\in\mathbb{R}^n$ is given by 
$$U^{\circ}(x;v) = \limsup_{h\rightarrow 0^+,y\rightarrow x}(U(y+hv)-U(y))/h.$$

\section{Problem Formulation}\label{sec:probform}

We consider a nonlinear plant and a dislocated controller with dynamic samplers at each output.
We first provide an overview of each subsystem, then define flow and jump conditions for closed-loop operation.

\subsection{Subsystems}

We employ four subsystems as depicted in Figure \ref{fig:framework_noidj}: a plant, a plant sampler, the event-triggered controller, and the controller sampler.

\subsubsection{Plant} 
Given a control input $\hat{u}\in\mathbb{R}^{n_{\hat{u}}}$, the plant state $x_p\in\mathbb{R}^{n_p}$ evolves according to
\begin{align}
    \dot{x}_p = f_p(x_p,\hat{u}),\label{eq:plantstate}
\end{align}
where the flow map $f_p:\mathbb{R}^{n_p}\times\mathbb{R}^{n_u}\rightarrow\mathbb{R}^{n_p}$ is continuous.
The plant then generates a measured state
\begin{align}
    x_m &= g_p(x_p) \in \mathbb{R}^{n_m}, \label{eq:measuredplantstate}
\end{align} 
using the static map $g_p:\mathbb{R}^{n_p}\rightarrow\mathbb{R}^{n_m}$, $n_m\leq n_p$.
To ensure that the plant sampler can recover the plant state from the measured state, the plant must be observable; we employ the following characterization of observability for nonlinear systems from \cite{hermann_nonlinear_1977}:

\begin{asm}\label{asm:oability}
For the system given by \eqref{eq:plantstate}--\eqref{eq:measuredplantstate}, there exists an equivalence relation $I$ on $\mathbb{R}^{n_p}$ such that for all $x_p\in\mathbb{R}^{n_p}$, $I(x_p)=\{x_p\}$.
\end{asm}

\subsubsection{Plant Sampler} 
We consider a plant sampler that conveys information about the plant state $x_p$ to the controller.
The first state of the plant sampler generates an estimated plant state $\hat{x}_p\in\mathbb{R}^{n_p}$ using the measured state $x_m$ and is governed by the hybrid dynamics
\begin{align}
    \dot{\hat{x}}_p&=\phi_p(\hat{x}_p,x_m), \label{eq:plantsamplerflow}\\
    \hat{x}_p^+&=h_p(\hat{x}_p,x_m),\label{eq:plantsamplerjump}
\end{align}
where the flow map $\phi_p:\mathbb{R}^{n_p}\times\mathbb{R}^{n_{m}}\times\mathbb{R}^{n_{\hat{u}}}\rightarrow\mathbb{R}^{n_{p}}$ and the jump map $h_p:\mathbb{R}^{n_p}\times\mathbb{R}^{n_m}\times\mathbb{R}^{n_{\hat{u}}}\rightarrow\mathbb{R}^{n_{p}}$ are continuous functions.
The second state of the plant sampler, $\eta_p\in\mathbb{R}_{\geq0}$, is an auxiliary variable that records the time elapsed since the last local sampling event, with dynamics given by
\begin{align}
    \dot{\eta}_p &= 1, \label{eq:planttimerflow}\\
    \eta_p^+ &= 0.\label{eq:planttimerjump}
\end{align}
We use $\eta_p$ to impose a minimum plant sampling interval $\tau_p>0$ and a maximum plant sampling interval $\tau_{\pi}>\tau_p$.
Local sampling events cause both $\hat{x}_p$ and $\eta_p$ to jump, while between sampling events $\hat{x}_p$ and $\eta_p$ flow.
The plant sampling error is denoted by $e_p = \hat{x}_p - x_p\in\mathbb{R}^{n_p}$. 

Defining an aggregate plant state $q_p = [x_p^T, e_p^T, \eta_p^T]^T\in\mathbb{R}^{n_{q,p}}$, we select a locally Lipschitz function $V_p:\mathbb{R}^{n_{q,p}}\rightarrow\mathbb{R}_{\geq0}$ and a continuously differentiable function $W_p:\mathbb{R}^{n_p}\rightarrow\mathbb{R}_{\geq0}$ so that sampling is triggered when either $\eta_p=\tau_{\pi}$ or both $V_p(q_p)\leq W_p(e_p)$ and $\eta_p\geq\tau_p$, with the latter condition preventing Zeno behavior (cf. \cite[Sec. V-C]{postoyan_framework_2015}).
To this end we define the local flow and jump sets
\begin{align}
    C_p &:= \{q_p: V_p(q_p)> W_p(e_p)\&\eta_p<\tau_{\pi}\parallel\eta_p<\tau_p\},\label{eq:plocalflowset}\\
    D_p &:= \{q_p: V_p(q_p)\leq W_p(e_p)\&\eta_p\geq\tau_p\parallel\eta_p=\tau_{\pi}\}.\label{eq:plocaljumpset}
\end{align}

\subsubsection{Controller}
The controller monitors the sampled plant state $\hat{x}_p$ and updates its output $u\in\mathbb{R}^{n_u}$ dynamically with
\begin{align}
    \dot{x}_c &= f_c(x_c,\hat{x}_p),\\
    u &= g_c(x_c,\hat{x}_p),
\end{align}
where the flow map $f_c:\mathbb{R}^{n_c}\times\mathbb{R}^{n_p}\rightarrow\mathbb{R}^{n_{c}}$ and $g_c:\mathbb{R}^{n_c}\times\mathbb{R}^{n_p}\rightarrow\mathbb{R}^{n_{u}}$ is continuous.

\subsubsection{Controller Sampler}
Similar to the plant sampler, the controller sampler's first state $\hat{u}\in\mathbb{R}^{n_u}$ is governed by the hybrid dynamics
\begin{align}
    \dot{\hat{u}}&=\phi_u(\hat{u},u), \\
    \hat{u}^+&=h_u(\hat{u},u),
\end{align}
where the flow map $\phi_u:\mathbb{R}^{n_u}\times\mathbb{R}^{n_{u}}\rightarrow\mathbb{R}^{n_u}$ and the jump map $h_u:\mathbb{R}^{n_u}\times\mathbb{R}^{n_{u}}\rightarrow\mathbb{R}^{n_u}$ are continuous functions representing the chosen sampling method.
The second state $\eta_c\in\mathbb{R}_{\geq0}$ records time elapsed since the last local sampling event with the hybrid dynamics
\begin{align}
    \dot{\eta}_c &=1, \label{eq:controllersamplertimerflow} \\
    \eta_c^+ &= 0. \label{eq:controllersamplertimerjump} 
\end{align}
We use $\eta_c$ to impose a minimum controller sampling interval $\tau_c > 0$ and a maximum sampling interval $\tau_{\kappa}>\tau_c$.
Local sampling events cause both $\hat{u}$ and $\eta_c$ to jump, otherwise $\hat{u}$ and $\eta_c$ flow. 
The controller sampling error is denoted by $e_u = \hat{u} - u\in\mathbb{R}^{n_u}$.

Defining the aggregate controller state $q_c = [x_c^T, e_u^T, \eta_c^T]^T\in\mathbb{R}^{n_{q,c}}$, we choose a locally Lipschitz function $V_c:\mathbb{R}^{n_{q,c}}\times\mathbb{R}_{\geq0}\rightarrow\mathbb{R}_{\geq0}$ and a continuously differentiable function $W_u:\mathbb{R}^{n_u}\rightarrow\mathbb{R}_{\geq0}$ so that sampling is triggered when either $\eta_c=\tau_{\kappa}$ or both $V_c(q_c)\leq W_u(e_u)$ and $\eta_c\geq\tau_c$ (again to prevent Zeno behavior).
We thus define the local flow and jump sets
\begin{align}
    C_c &:= \{q_c: V_c(q_c)> W_u(e_u)\&\eta_c<\tau_{\kappa}\parallel\eta_c<\tau_c\},\label{eq:clocalflowset}\\
    D_c &:= \{q_c: V_c(q_c)\leq W_u(e_u)\&\eta_c\geq\tau_c\parallel \eta_c=\tau_{\kappa}\}.\label{eq:clocaljumpset}
\end{align}

\subsection{Closed-Loop Dynamics}

We next consider a consolidation of the four subsystems (plant, plant sampler, controller, controller sampler) into two subsystems (plant and its sampler, controller and its sampler).
For the networked system we define the flow set 
\begin{align} C&=\{(q_p,q_c)\in\mathbb{R}^{n_p+n_c}:q_p\in C_p\,\&\,q_c\in C_c\},\end{align}
the jump set 
\begin{align} D&=\{(q_p,q_c)\in\mathbb{R}^{n_p+n_c}:q_p\in D_p\,\parallel\,q_c\in D_c\},\end{align}
the flow maps
\begin{align}
F_1(q_p,q_c)&=\begin{bmatrix}
&f_p(x_p,\hat{u})\\ 
&\phi_p(\hat{x}_p,x_m)-f_p(x_p,\hat{u})\\ 
&1\end{bmatrix}, \\
F_2(q_p,q_c)&=\begin{bmatrix}
    &f_c(x_c,\hat{x}_p)^T\\
    &\phi_u(\hat{u},u)^T-\frac{d g_c}{d t}(x_c,\hat{x}_p)\\
    &1
\end{bmatrix},
\end{align}
wherein we substitute $x_m=g_p(x_p)$, $\hat{x}_p = x_p + e_p$, and $\hat{u} = u + e_u = g_c(x_c,x_p + e_p) + e_u$,
and the jump maps 
\begin{align}
    G_1(q_p,q_c) &= \begin{cases}
    \begin{bmatrix}
        x_p\\0\\0
    \end{bmatrix}, &(q_p,q_c)\in D_p\times C_c,\\
    \begin{bmatrix}
        x_p\\e_p\\ \eta_p
    \end{bmatrix}, &(q_p,q_c)\in C_p\times D_c,\\
    \left\{\begin{bmatrix}
        x_p\\0\\0
    \end{bmatrix},\begin{bmatrix}
        x_p\\e_p\\ \eta_p
    \end{bmatrix}\right\}, &(q_p,q_c)\in D_p\times D_c, \label{eq:pdoublejump}
    \end{cases}
\end{align}
and
\begin{align}
    G_2(q_p,q_c) &= \begin{cases}
    \begin{bmatrix}
        x_c\\ e_u\\ \eta_c
    \end{bmatrix}, &(q_p,q_c)\in D_p\times C_c,\\
    \begin{bmatrix}
        x_c\\0\\0
    \end{bmatrix}, &(q_p,q_c)\in C_p\times D_c,\\
    \left\{\begin{bmatrix}
        x_c\\ e_u\\ \eta_c
    \end{bmatrix},\begin{bmatrix}
        x_c\\0\\0
    \end{bmatrix}
    \right\}, &(q_p,q_c)\in D_p\times D_c. \label{eq:cdoublejump}
    \end{cases}
\end{align}
Note that \eqref{eq:pdoublejump} and \eqref{eq:cdoublejump} denote successive jumps due to both samplers triggering simultaneously (cf. \cite{abdelrahim_robust_2017}).
Similar to \cite{liberzon_lyapunov-based_2014} and \cite{wang_unifying_2020}, we thus represent the hybrid dynamics of $(q_p,q_c)$ as
\begin{align}
    \begin{bmatrix}
        \dot{q}_p\\
        \dot{q}_c
    \end{bmatrix} &\in \begin{bmatrix}
        F_1(q_p,q_c)\\
        F_2(q_p,q_c)
    \end{bmatrix}, &(q_p,q_c)\in C,\label{eq:sysflows} \\
    \begin{bmatrix}
        q^+_p\\
        q^+_c
    \end{bmatrix} &\in \begin{bmatrix}
        G_1(q_p,q_c)\\
        G_2(q_p,q_c)
    \end{bmatrix}, &(q_p,q_c)\in D. \label{eq:sysjumps}
\end{align}

We make the following assumption adapted from \cite[Assumption III.2]{liberzon_lyapunov-based_2014} in order to later apply the hybrid small-gain theorem from \cite{liberzon_lyapunov-based_2014}.
\begin{asm}\label{asm:smallgain}
The functions $V_p$ and $V_c$ used in \eqref{eq:plocalflowset}--\eqref{eq:plocaljumpset} and \eqref{eq:clocalflowset}--\eqref{eq:clocaljumpset} respectively satisfy the following:
\begin{enumerate}
    \item \label{asm:smallgain1} given the sets $\mathcal{A}_p=\{q_p:x_p = 0, e_p=0, \eta_p\in[0,\tau_{\pi}]\}$ and $\mathcal{A}_c=\{q_c:x_c = 0, e_u=0, \eta_c\in[0,\tau_{\kappa}]\}$ there exist functions $\underline{\alpha}_p,\bar{\alpha}_p,\underline{\alpha}_c,\bar{\alpha}_c\in\mathcal{K}_{\infty}$ such that for all $q_p\in\mathbb{R}^{n_{q,p}}$ and $q_c\in\mathbb{R}^{n_{q,c}}$,
    \begin{align}
        &\underline{\alpha}_p(|q_p|_{\mathcal{A}_p}) \leq V_p(q_p) \leq \bar{\alpha}_p(|q_p|_{\mathcal{A}_p}),\\
        &\underline{\alpha}_c(|q_c|_{\mathcal{A}_c}) \leq V_c(q_c) \leq \bar{\alpha}_c(|q_c|_{\mathcal{A}_c});
    \end{align} 
    \item \label{asm:smallgain2} there exist functions $\chi_p,\chi_c\in\mathcal{K}_{\infty}\cup\{0\}$, $\alpha_p,\alpha_c:\mathbb{R}_{\geq 0}\rightarrow \mathbb{R}_{\geq 0}$ such that for all $(q_p,q_c)\in C$,
    \begin{align}
        V_p(q_p)\geq \chi_p(V_c(q_c))\implies& \nonumber\\
        V_p^{\circ}(q_p,F_1(q_p,q_c))&\leq -\alpha_p(|q_p|_{\mathcal{A}_p}),\\
        V_c(q_c)\geq \chi_c(V_p(q_p))\implies& \nonumber\\
        V_c^{\circ}(q_c,F_2(q_p,q_c))&\leq -\alpha_c(|q_c|_{\mathcal{A}_c});
    \end{align}
    \item \label{asm:smallgain3} for all $(q_p,q_c)\in D$, 
    \begin{align}
        V_p(G_1(q_p,q_c))\leq V_p(q_p),\\
        V_c(G_2(q_p,q_c))\leq V_c(q_c);
    \end{align}
    \item \label{asm:smallgain4} $\chi_p \circ \chi_c (s) < s $ for all $s>0$;
    \item \label{asm:smallgain5} there exists a function $\rho\in\mathcal{K}_{\infty}$ that is continuously differentiable on $(0,\infty)$ satisfying both $\chi_p(r)<\rho(r)<\chi_c^{-1}(r)$ and $\rho'(r)>0$ for all $r>0$.
\end{enumerate}
\end{asm}

\begin{rem}\label{rem:hard}
In the context of \eqref{eq:sysflows}--\eqref{eq:sysjumps}, the conditions of Assumption \ref{asm:smallgain} are conservative and imply that growth in the error variables $e_p$ and $e_u$ depends on the choice of controller and plant gains, in addition to the minimum sampling periods $\tau_p$ and $\tau_c$.
For all but the most trivial systems, it is challenging to analytically determine the functions necessary to satisfy Assumption \ref{asm:smallgain}. 
Nevertheless, we may still use the intuition underlying these conditions to inform the choice of parameters for more complex systems (cf. Section \ref{sec:eg}).
\end{rem}

Finally, defining an aggregate state $q = [q_p^T, q_c^T]^T\in\mathbb{R}^{n_q}=\mathbb{R}^{n_{q,p}+n_{q,c}}$, we may represent \eqref{eq:sysflows}--\eqref{eq:sysjumps} as a well-posed hybrid dynamical system of the form
\begin{align}
    \dot{q} &\in F(q), &q\in C,\label{eq:generalFlow}\\
    q^+ &\in G(q), &q\in D,\label{eq:generalJump}
\end{align}
where given $q= [q_p^T, q_c^T]^T$, the flow map $F(q) = [F_1(q_p,q_c)^T, F_2(q_p,q_c)^T]^T$ and jump map $G(q) = [G_1(q_p,q_c)^T, G_2(q_p,q_c)^T]^T$.
\begin{asm}\label{asm:maxcomplete}
$G(D)\subset C \cup D$ and $F (q) \in T_C (q)$ for any $q \in C \backslash D$.
\end{asm}

\section{Main Result}\label{sec:mainres}

Having introduced a hybrid event-triggered measurement and control scheme in Section \ref{sec:probform}, we shall now focus on the solutions of the hybrid system \eqref{eq:generalFlow}--\eqref{eq:generalJump}. 
The following theorem concerns the stability of the set $\mathcal{A}_p\times\mathcal{A}_c$ 
for solutions of \eqref{eq:generalFlow}--\eqref{eq:generalJump}.

\begin{thm}
\label{thm:ugasA}
Given the system \eqref{eq:generalFlow}--\eqref{eq:generalJump} under Assumptions \ref{asm:oability} and \ref{asm:smallgain}, if the set $\mathcal{A}_p\times\mathcal{A}_c$ is compact, $G(D)\subset C\cup D$ and $F(q)\in T_C(q)$ for any $q\in C\\D$, then the set $\mathcal{A}_p\times\mathcal{A}_c$ is uniformly globally asymptotically stable. 
\end{thm}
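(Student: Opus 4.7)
The plan is to recognise \eqref{eq:generalFlow}--\eqref{eq:generalJump} as the feedback interconnection of two hybrid subsystems — the plant together with its sampler, and the controller together with its sampler — and to conclude UGAS of $\mathcal{A}_p\times\mathcal{A}_c$ by invoking the hybrid small-gain theorem of \cite{liberzon_lyapunov-based_2014}, for which Assumption \ref{asm:smallgain} has been deliberately tailored.

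First, I would dispense with well-posedness. The continuity of $f_p, g_p, \phi_p, h_p, f_c, g_c, \phi_u, h_u, V_p, V_c, W_p, W_u$ together with the structure of $C_p, D_p, C_c, D_c$ gives the hybrid data the closedness and outer semicontinuity properties required by the framework of \cite{liberzon_lyapunov-based_2014}, while the hypotheses $G(D)\subset C\cup D$ and $F(q)\in T_C(q)$ on $C\setminus D$ prevent finite escape through the boundary of $C$ and guarantee that every maximal solution is complete in the sense needed for an asymptotic-stability statement.

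Second, I would read off the ISS-type Lyapunov structure directly from Assumption \ref{asm:smallgain}. Parts (\ref{asm:smallgain1})--(\ref{asm:smallgain3}) supply $\mathcal{K}_\infty$ sandwich bounds on $V_p, V_c$ relative to $|\cdot|_{\mathcal{A}_p}, |\cdot|_{\mathcal{A}_c}$, a strict Clarke decrease along flows whenever one subsystem ``dominates'' the other in the sense of the cross-gains $\chi_p, \chi_c$, and a non-increase property at jumps. The set-valued branch of $G$ on $D_p\times D_c$ in \eqref{eq:pdoublejump}--\eqref{eq:cdoublejump} must be checked against Assumption \ref{asm:smallgain}(\ref{asm:smallgain3}) for both selections; this follows because each selection either freezes $q_p$ and jumps $q_c$, or vice versa, and the single-sided jump inequalities then transfer immediately (the triggered error and timer reset to zero while the untouched coordinates are unchanged).

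Third, I would combine the subsystem certificates using the small-gain condition. Part (\ref{asm:smallgain4}) gives $\chi_p\circ\chi_c<\mathbb{I}$ and part (\ref{asm:smallgain5}) yields a smooth separating function $\rho$ with $\chi_p<\rho<\chi_c^{-1}$. Following the construction in \cite{liberzon_lyapunov-based_2014}, one assembles a composite Lyapunov function $V(q)=\max\{V_p(q_p),\rho(V_c(q_c))\}$ (or a smoothing thereof) which is sandwiched between class-$\mathcal{K}_\infty$ functions of $|q|_{\mathcal{A}_p\times\mathcal{A}_c}$, admits a strict Clarke decrease on $C$, and does not increase on $D$. The standard hybrid Lyapunov theorem then delivers uniform global pre-asymptotic stability of $\mathcal{A}_p\times\mathcal{A}_c$, and compactness of $\mathcal{A}_p\times\mathcal{A}_c$ combined with completeness of solutions promotes this to UGAS. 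The main obstacle I expect is the bookkeeping surrounding the set-valued jump on $D_p\times D_c$ and the tangent cone hypothesis at points where the ``or'' disjunctions in \eqref{eq:plocalflowset}, \eqref{eq:clocalflowset} switch between the error-threshold clause and the maximum-sampling-interval clause; once those structural checks are verified, the theorem follows as a direct application of the small-gain framework.
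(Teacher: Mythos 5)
Your overall architecture coincides with the paper's: both invoke the hybrid small-gain theorem of \cite{liberzon_lyapunov-based_2014} to build the composite function $U(q)=\max(V_p(q_p),\rho(V_c(q_c)))$ with $\mathcal{K}_\infty$ sandwich bounds, strict Clarke decrease along flows, and non-increase at jumps, and both finish by using compactness of $\mathcal{A}_p\times\mathcal{A}_c$ together with completeness of maximal solutions to upgrade pre-asymptotic stability to UGAS. However, there is a genuine gap at the pivotal step. You write that ``the standard hybrid Lyapunov theorem then delivers uniform global pre-asymptotic stability,'' but the standard sufficient condition requires a \emph{strict} decrease of $U$ at jumps, whereas Assumption \ref{asm:smallgain}(\ref{asm:smallgain3}) only yields $U(G(q))\leq U(q)$. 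With a merely non-increasing $U$ across jumps, nothing in your argument excludes solutions whose hybrid time domain is dominated by jumps (so that the flow decrease never accumulates), and pre-asymptotic stability does not follow. Completeness of maximal solutions, which you do address, is not a substitute: a complete solution can still jump persistently relative to its flow time.

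The missing ingredient is a quantitative ``persistent flowing'' bound, and supplying it is precisely why the timer states $\eta_p,\eta_c$ and the minimum inter-sample times $\tau_p,\tau_c$ appear in the jump sets \eqref{eq:plocaljumpset} and \eqref{eq:clocaljumpset}. The paper argues that among any three consecutive jumps at least two must involve the same sampler (the possible jump types being plant-only, controller-only, or both), so at least $\tau:=\min(\tau_p,\tau_c)$ of ordinary time elapses between the first and third jump; this yields $j-i\leq(t-s)/\tau_a+1$ with $\tau_a=\tau/2$ for all $(s,i),(t,j)$ in the solution's domain, and hence a bound of the form $t\geq\gamma_r(t+j)-N_r$. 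That bound is exactly the hypothesis of \cite[Prop.~3.27]{goebel_hybrid_2012}, which tolerates the non-strict jump inequality and delivers uniform global pre-asymptotic stability from the flow decrease alone. Without establishing this dwell-time-type estimate (or some equivalent exclusion of jump-dominated solutions), your proof does not close.
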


\begin{proof}
Let $\phi$ denote a solution of the system \eqref{eq:sysflows}--\eqref{eq:sysjumps} under Assumptions \ref{asm:oability} and \ref{asm:smallgain}. 
Define a storage function $U(q) = \max(V_p(q_p),\rho(V_c(q_c)))$. 
By \cite[Thm. III.3]{liberzon_lyapunov-based_2014}, the following hold:
\begin{enumerate}
    \item there exist functions $\underline{\alpha}_U, \bar{\alpha}_U\in \mathcal{K}_{\infty}$ such that for all $q\in\mathbb{R}_{n_q}$,
    \begin{align}\underline{\alpha}_U(|q|_{\mathcal{A}})\leq U(q)\leq \bar{\alpha}_U(|q|_{\mathcal{A}}); \label{eq:libthm3.1}
    \end{align}
    \item  there exists a positive definite function $\alpha_U:\mathbb{R}_{\geq 0}\rightarrow \mathbb{R}_{\geq 0}$ such that for all $q\in C \backslash \mathcal{A}$,
    \begin{align}
        U^{\circ}(q,F(q))\leq -\alpha_U(|q|_{\mathcal{A}});\label{eq:libthm3.2}
    \end{align}
    \item For all $q\in D$, 
    \begin{align}
        U(G(q))\leq U(q). \label{eq:libthm3.3}
    \end{align}
\end{enumerate}

Due to the construction of the jump sets \eqref{eq:plocaljumpset} and \eqref{eq:clocaljumpset}, a continuous time interval of at least $\tau_p$ elapses between any pair of jumps in $q_p$; these jumps may be caused either by the plant sampler alone (which we denote \textit{`P'}) or by both samplers (which we denote \textit{`B'}). 
Similarly a continuous time interval of at least $\tau_c$ elapses between any pair of jumps in $q_c$ caused either by both samplers, or the controller sampler alone (which we denote \textit{`C'}). 
If any sequence of at least three jumps occurs, then at least two of these jumps involve the same state (e.g. the sequence \textit{`CPC'} involves two jumps in $q_c$, while the sequence \textit{`BP'} involves two jumps in $q_p$. 
Hence an interval of at least $\tau:=\min(\tau_p,\tau_c)$ must elapse between the first and the third jump.
Define the quantity $\tau_a = \frac{\tau}{2}$.
We recall from \cite{goebel_hybrid_2012} that the solution's hybrid time domain $\mathrm{dom}\phi$ satisfies
\begin{align}
    \mathrm{dom}\phi \cap ([0,t]\times\{0,...,j\}) &= \bigcup_{i\in\{0,...,j\}}[t_i,t_{i+1}]\times\{i\}\label{eq:solution domain}
\end{align}
for any point $(t,j)\in \mathrm{dom}\phi$ and sequence of times $0=t_0\leq t_1\leq...\leq t_{j+1}=t$.
Given $(s,i),(t,j)\in\mathrm{dom}\phi$, we wish to show that if $(s+i)\leq(t+j)$, then the inequality
\begin{align}
    j-i\leq\frac{t-s}{\tau_a}+1 \label{eq:quasidwelltime}
\end{align}
holds for all $(j-i)\in\mathbb{N}\cup\{0\}$.
If $j-i=0$, $(s+i)\leq(t+j)$ reduces to $s\leq t$; since $\tau_a>0$, \eqref{eq:quasidwelltime} holds.
If $j-i = 1$,  $(s+i)\leq(t+j)$ reduces to $s\leq t+1$, however by the definition of $\mathrm{dom}\phi$ necessarily $s\leq t$ for $i<j$, so \eqref{eq:quasidwelltime} holds.
If $j-i = 2n$, $n\in\mathbb{N}$, then we apply the result that $t-s \geq n\tau=2n\tau_a$ to show that \eqref{eq:quasidwelltime} holds.
If $j-i = 2n+1$, $n\in\mathbb{N}$, we may again apply $t-s \geq 2n\tau_a$ to show that \eqref{eq:quasidwelltime} holds for all $(j-i)\in\mathbb{N}\cup\{0\}$.
Given \eqref{eq:quasidwelltime} and taking $(s+i)=0$, $(t+j)\geq 0$ implies 
\begin{align}
    j\leq \frac{t}{\tau_a}+1. \label{eq:jumptimebound}
\end{align}
We may thus apply \eqref{eq:jumptimebound} to the inequality $(t+j)\geq T$ for some $T>0$, yielding the inequality
\begin{align}
    t\geq \gamma_r(T) - N_r, \label{eq:timebound}
\end{align}
where $\gamma_r(s) = (1+\tau_a^{-1})^{-1}s$ for $s>0$, and $N_r=(1+\tau_a^{-1})^{-1}$.
Given \eqref{eq:libthm3.1}, \eqref{eq:libthm3.2}, \eqref{eq:libthm3.3}, and \eqref{eq:timebound}, we may now apply \cite[Prop. 3.27]{goebel_hybrid_2012} to demonstrate that the set $\mathcal{A}_p\times\mathcal{A}_c$ is uniformly globally pre-asymptotically stable.
Since $\mathcal{A}_p\times\mathcal{A}_c$ is compact and following the approach of \cite{postoyan_framework_2015}, we use \cite[Prop. 6.10]{goebel_hybrid_2012} with Assumption \ref{asm:maxcomplete} to show that maximal solutions of \eqref{eq:generalFlow}--\eqref{eq:generalJump} are complete, hence $\mathcal{A}_p\times\mathcal{A}_c$ is uniformly globally asymptotically stable. 
\end{proof}

\section{Numerical Example}\label{sec:eg}

\begin{figure}[t]
    \centering
    \includegraphics[width=0.9\columnwidth]{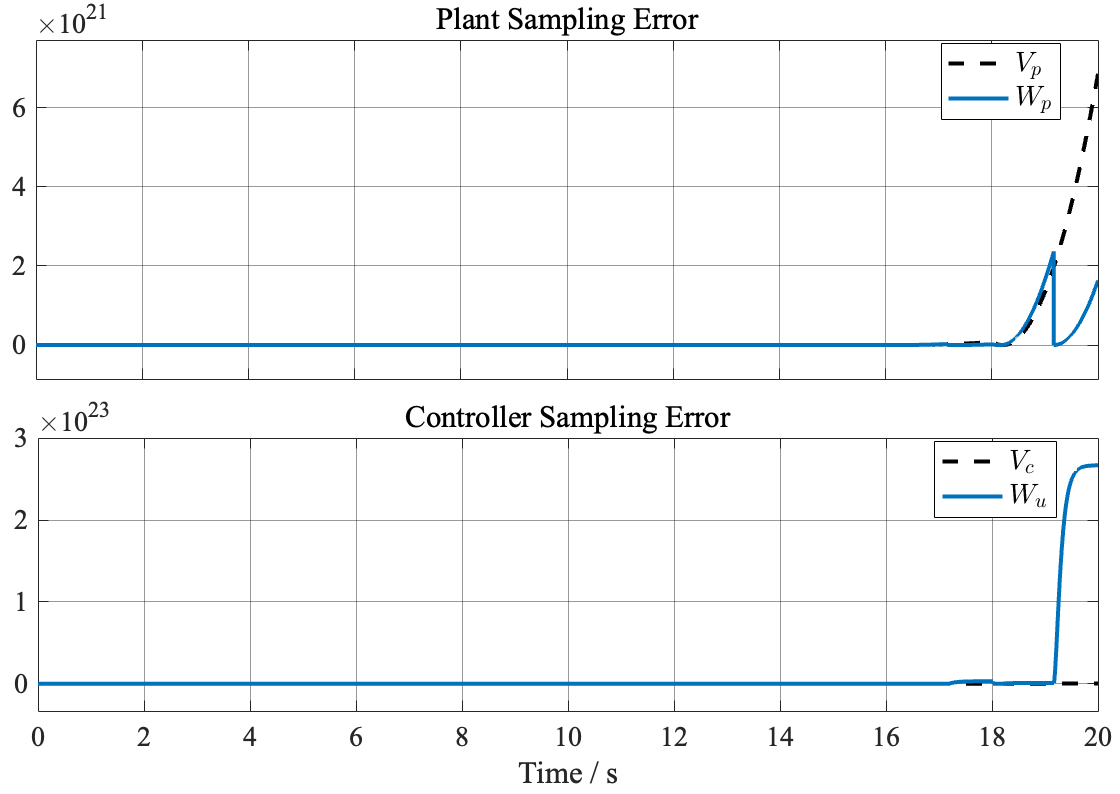}
    \caption{Sampling Error for $k_p=10$, $\tau_p = 1$ s, $\tau_c = 2$ s.}
    \label{fig:1-2-10 thres}
\end{figure}
\begin{figure}[t]
    \centering
    \includegraphics[width=0.9\columnwidth]{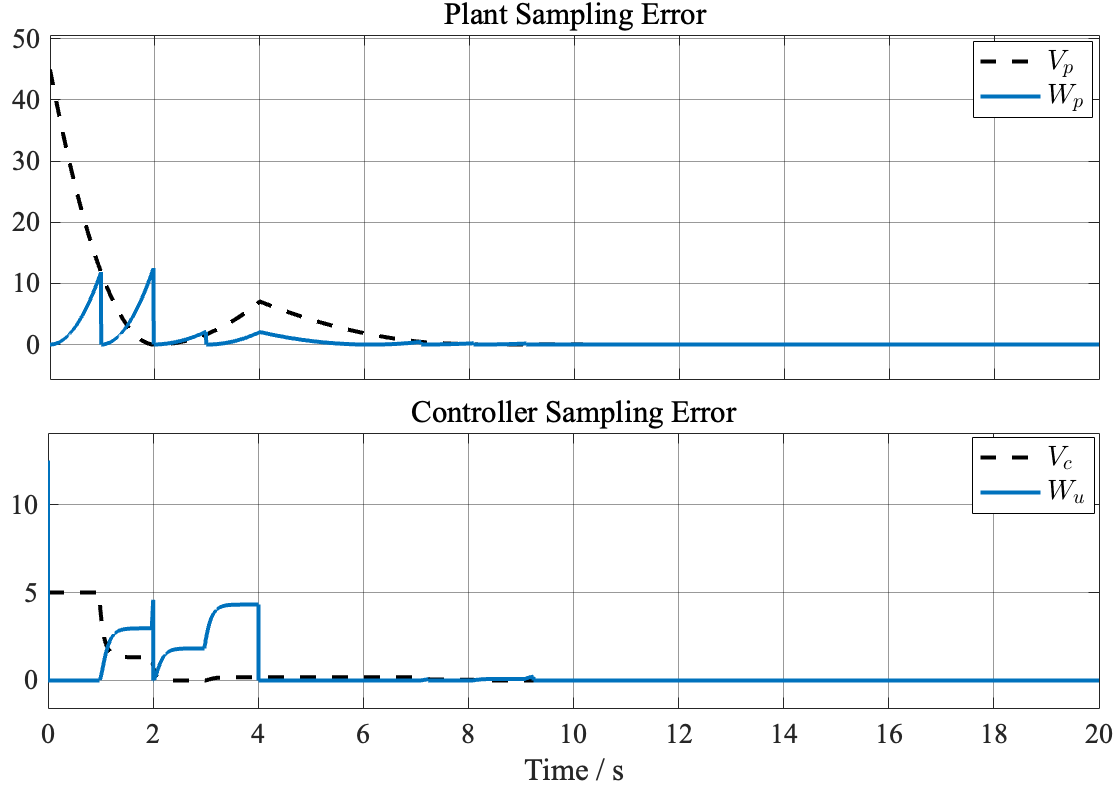}
    \caption{Sampling Error for $k_p=0.5$, $\tau_p = 1$ s, $\tau_c = 2$ s.}
    \label{fig:1-2-0.5 thres}
\end{figure}
\begin{figure}[t]
    \centering
    \includegraphics[width=0.9\columnwidth]{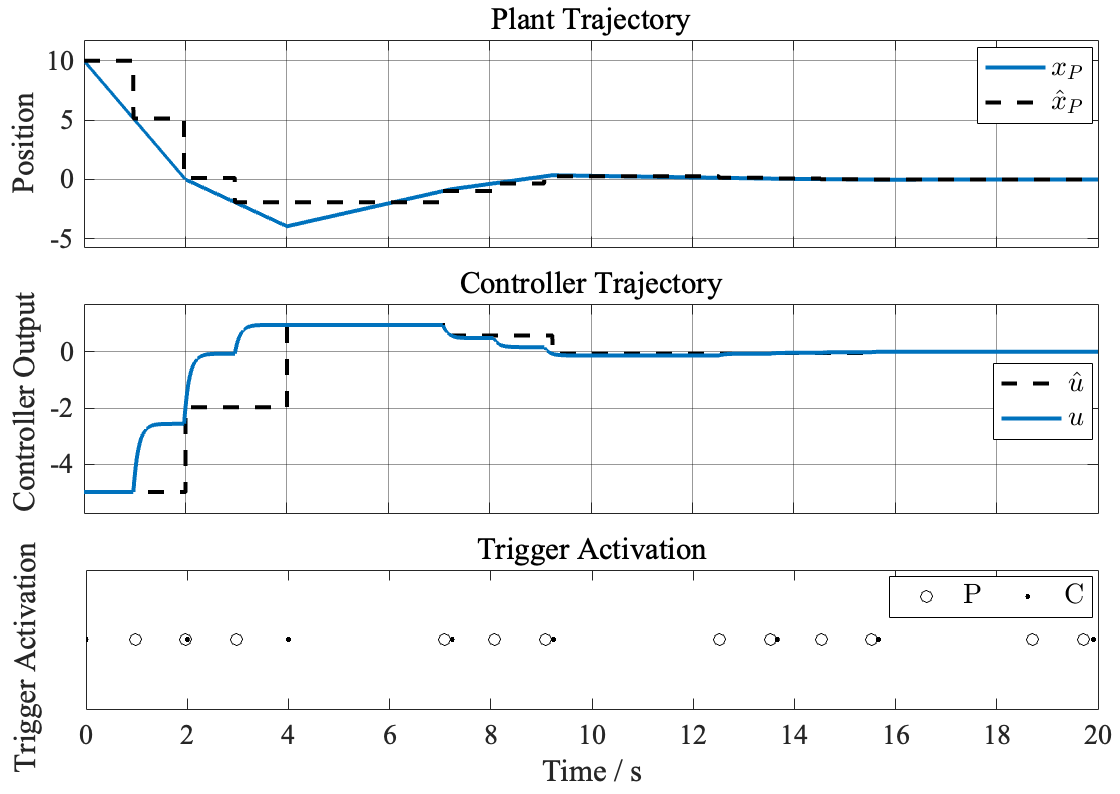}
    \caption{Trajectories for $k_p=0.5$, $\tau_p = 1$ s, $\tau_c = 2$ s.}
    \label{fig:1-2-0.5 traj}
\end{figure}
\begin{figure}[t]
    \centering
    \includegraphics[width=0.9\columnwidth]{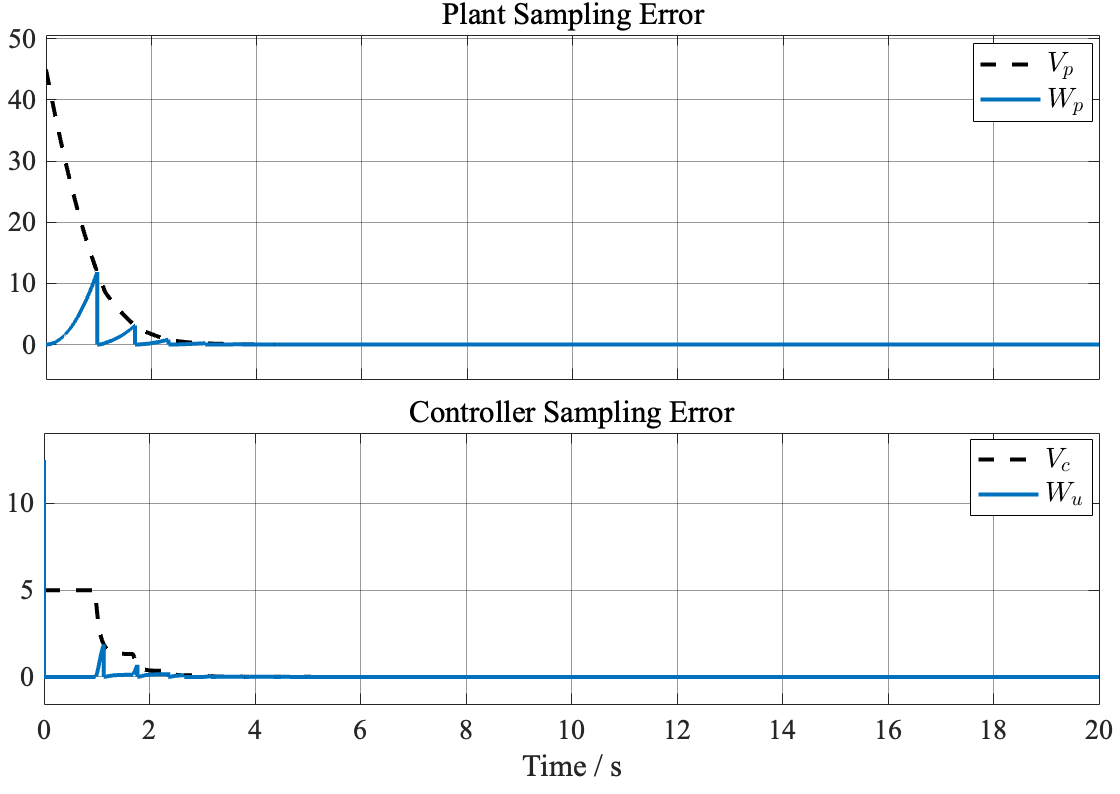}
    \caption{Sampling Error for $k_p=0.5$, $\tau_p = 0.2$ s, $\tau_c = 0.3$ s.}
    \label{fig:0.2-0.3-0.5 thres}
\end{figure}
\begin{figure}[t]
    \centering
    \includegraphics[width=0.9\columnwidth]{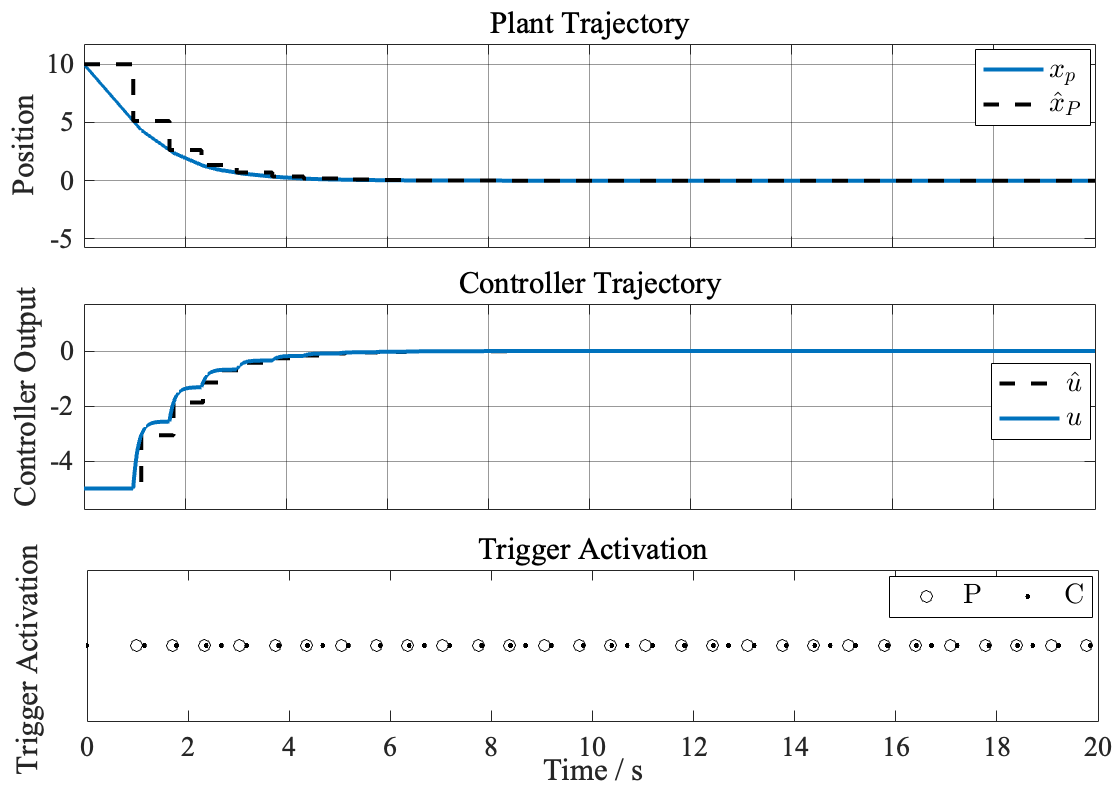}
    \caption{Trajectories for $k_p=0.5$, $\tau_p = 0.2$ s, $\tau_c = 0.3$ s.}
    \label{fig:0.2-0.3-0.5 traj}
\end{figure}
        
To illustrate an asynchronous event-triggered control scheme, we consider a single-integrator plant controlled by a proportional regulator.
The plant dynamics are given by $\dot{x}_p = u+e_u$, with $x_p,u,e_u\in\mathbb{R}$ and $x_p(0,0)= 10$.
The plant state is then conveyed directly to the plant sampler (i.e. $x_m=x_p$), which samples $x_p$ using a zero-order hold scheme with $\tau_P\in \{0.2,1\}$ s and $\tau_{\pi}=60$ s.
We also define the storage function $V_p(q_p) = \frac{1}{2}x_p^2 + \frac{\beta_p}{2} e_p^2$, $\beta_p = 10^{-5}$, and error threshold function
$W_p(q_p) = \frac{1+\beta_p}{2}e_p^2$.

The controller state $x_c\in\mathbb{R}$ tracks $\hat{x}_p$ using $\dot{x}_c = \hat{x}_p - x_c$, with $x_c(0,0)= x_p(0,0)$.
The controller then uses this estimate of the plant state to generate the control input $u = -k_px_c$, where $k_p\in\{0.5,10\}$.
The control input is sampled by a zero-order hold scheme with $\tau_C\in\{0.3,2\}$ s and $\tau_{\kappa}=120$ s.
Similarly, we define the storage function $V_c(q_c) = \frac{1}{10}x_c^2 + \frac{\beta_c}{2} e_u^2$, $\beta_c = 10^{-5}$, and the error threshold function $W_u(q_c) = \frac{1+\beta_c}{2}e_u^2$.

As discussed in Remark \ref{rem:hard}, it is challenging to verify \textit{a priori} whether Assumption \ref{asm:smallgain} holds.
We shall instead illustrate how the underlying principles can inform the refinement of parameter choices.
First we consider the nominal example with $\{k_p=10, \tau_p=1,\tau_c=2\}$.
As depicted in Figure \ref{fig:1-2-10 thres}, both samplers' errors diverge.

One source of instability may be the closed-loop gain magnitude.
Indeed by reducing the plant gain to $k_p=0.5$, the system's errors no longer diverge in Figure \ref{fig:1-2-0.5 thres}. 
We see however in Figure \ref{fig:1-2-0.5 traj} that the plant state overshoots the origin several times, potentially due to insufficiently frequent triggering.

Reducing the minimum sampling intervals to $\tau_p=0.2$ s and $\tau_c=0.3$ s, we observe in Figure \ref{fig:0.2-0.3-0.5 thres} that the error thresholds no longer increase intermittently with plant state overshoot eliminated.
Noting the increase in trigger activations in Figure \ref{fig:0.2-0.3-0.5 traj}, we may thus infer a trade-off between trigger frequency and system overshoot.

\section{Conclusions}\label{sec:conc}
        
We have presented a framework for performing event-triggered control with hybrid nonlinear systems.
Having formulated an additional controller sampler trigger distinct from a system-wide trigger, we have demonstrated that the framework permits more frequent sampling of the controller state while maintaining a minimum time between successive sampling events and asymptotic stability of the origin for the system's state.
Practical considerations regarding parameter selection for the framework have been illustrated with a numerical example.
Future work could consider the implications of channel noise, revising the model to handle exogenous inputs, and integrating online observer schemes into the controller sampler and plant sampler subsystems.

\section*{Acknowledgements}
This paper has been written on the lands of the Boonwurrung and Wurundjeri Woi Wurrung peoples.
This research has been supported by an Australian Government Research Training Program (RTP) Scholarship.
The authors thank Dr Romain Postoyan, Prof Dragan Nešić and Elena Vella for discussions and comments.

\bibliographystyle{IEEEtran}
\bibliography{root}

\begin{thebibliography}{10}
\providecommand{\url}[1]{#1}
\csname url@samestyle\endcsname
\providecommand{\newblock}{\relax}
\providecommand{\bibinfo}[2]{#2}
\providecommand{\BIBentrySTDinterwordspacing}{\spaceskip=0pt\relax}
\providecommand{\BIBentryALTinterwordstretchfactor}{4}
\providecommand{\BIBentryALTinterwordspacing}{\spaceskip=\fontdimen2\font plus
\BIBentryALTinterwordstretchfactor\fontdimen3\font minus
  \fontdimen4\font\relax}
\providecommand{\BIBforeignlanguage}[2]{{%
\expandafter\ifx\csname l@#1\endcsname\relax
\typeout{** WARNING: IEEEtran.bst: No hyphenation pattern has been}%
\typeout{** loaded for the language `#1'. Using the pattern for}%
\typeout{** the default language instead.}%
\else
\language=\csname l@#1\endcsname
\fi
#2}}
\providecommand{\BIBdecl}{\relax}
\BIBdecl

\bibitem{tabuada_event-triggered_2007}
P.~Tabuada, ``Event-{Triggered} {Real}-{Time} {Scheduling} of {Stabilizing}
  {Control} {Tasks},'' \emph{IEEE Transactions on Automatic Control}, vol.~52,
  no.~9, pp. 1680--1685, Sep. 2007.

\bibitem{astrom_event_2008}
K.~J. Åström, ``\BIBforeignlanguage{en}{Event {Based} {Control}},'' in
  \emph{\BIBforeignlanguage{en}{Analysis and {Design} of {Nonlinear} {Control}
  {Systems}: {In} {Honor} of {Alberto} {Isidori}}}, A.~Astolfi and L.~Marconi,
  Eds.\hskip 1em plus 0.5em minus 0.4em\relax Berlin, Heidelberg: Springer,
  2008, pp. 127--147.

\bibitem{heemels_introduction_2012}
W.~Heemels, K.~Johansson, and P.~Tabuada, ``An introduction to event-triggered
  and self-triggered control,'' in \emph{2012 {IEEE} 51st {IEEE} {Conference}
  on {Decision} and {Control} ({CDC})}, Dec. 2012, pp. 3270--3285, iSSN:
  0743-1546.

\bibitem{postoyan_framework_2015}
R.~Postoyan, P.~Tabuada, D.~Nešić, and A.~Anta, ``A {Framework} for the
  {Event}-{Triggered} {Stabilization} of {Nonlinear} {Systems},'' \emph{IEEE
  Transactions on Automatic Control}, vol.~60, no.~4, pp. 982--996, Apr. 2015.

\bibitem{sanfelice_hybrid_2020}
R.~G. Sanfelice, \emph{\BIBforeignlanguage{en}{Hybrid {Feedback}
  {Control}}}.\hskip 1em plus 0.5em minus 0.4em\relax Princeton University
  Press, Dec. 2020.

\bibitem{maass_state_2021}
A.~I. Maass and D.~Nešić, ``State estimation of non-linear systems over
  random access wireless networks,'' in \emph{2021 60th {IEEE} {Conference} on
  {Decision} and {Control} ({CDC})}, Dec. 2021, pp. 6940--6945.

\bibitem{mazo_asynchronous_2014}
M.~Mazo and M.~Cao, ``\BIBforeignlanguage{en}{Asynchronous decentralized
  event-triggered control},'' \emph{\BIBforeignlanguage{en}{Automatica}},
  vol.~50, no.~12, pp. 3197--3203, Dec. 2014.

\bibitem{fu_periodic_2016}
A.~Fu and M.~Mazo, ``Periodic asynchronous event-triggered control,'' in
  \emph{2016 {IEEE} 55th {Conference} on {Decision} and {Control} ({CDC})},
  Dec. 2016, pp. 1370--1375.

\bibitem{li_asynchronous_2021}
X.~Li and D.~Ye, ``Asynchronous {Event}-{Triggered} {Control} for {Networked}
  {Interval} {Type}-2 {Fuzzy} {Systems} {Against} {DoS} {Attacks},'' \emph{IEEE
  Transactions on Fuzzy Systems}, vol.~29, no.~2, pp. 262--274, Feb. 2021,
  conference Name: IEEE Transactions on Fuzzy Systems.

\bibitem{nam_predicting_2017}
C.~Nam, P.~Walker, M.~Lewis, and K.~Sycara, ``Predicting trust in human control
  of swarms via inverse reinforcement learning,'' in \emph{2017 26th {IEEE}
  {International} {Symposium} on {Robot} and {Human} {Interactive}
  {Communication} ({RO}-{MAN})}, Aug. 2017, pp. 528--533.

\bibitem{drew_multi-agent_2021}
\BIBentryALTinterwordspacing
D.~S. Drew, ``\BIBforeignlanguage{en}{Multi-{Agent} {Systems} for {Search} and
  {Rescue} {Applications}},'' \emph{\BIBforeignlanguage{en}{Current Robotics
  Reports}}, vol.~2, no.~2, pp. 189--200, Jun. 2021. [Online]. Available:
  \url{https://doi.org/10.1007/s43154-021-00048-3}
\BIBentrySTDinterwordspacing

\bibitem{abdelrahim_robust_2017}
\BIBentryALTinterwordspacing
M.~Abdelrahim, R.~Postoyan, J.~Daafouz, and D.~Nešić,
  ``\BIBforeignlanguage{en}{Robust event-triggered output feedback controllers
  for nonlinear systems},'' \emph{\BIBforeignlanguage{en}{Automatica}},
  vol.~75, pp. 96--108, Jan. 2017. [Online]. Available:
  \url{https://www.sciencedirect.com/science/article/pii/S0005109816303892}
\BIBentrySTDinterwordspacing

\bibitem{dolk_event-triggered_2017}
\BIBentryALTinterwordspacing
V.~Dolk and M.~Heemels, ``\BIBforeignlanguage{en}{Event-triggered control
  systems under packet losses},'' \emph{\BIBforeignlanguage{en}{Automatica}},
  vol.~80, pp. 143--155, Jun. 2017. [Online]. Available:
  \url{https://www.sciencedirect.com/science/article/pii/S0005109817301036}
\BIBentrySTDinterwordspacing

\bibitem{scheres_distributed_2022}
\BIBentryALTinterwordspacing
K.~J.~A. Scheres, V.~S. Dolk, M.~S. Chong, R.~Postoyan, and W.~P. M.~H.
  Heemels, ``\BIBforeignlanguage{en}{Distributed {Periodic} {Event}-{Triggered}
  {Control} of {Nonlinear} {Multi}-{Agent} {Systems}},''
  \emph{\BIBforeignlanguage{en}{IFAC-PapersOnLine}}, vol.~55, no.~13, pp.
  168--173, Jan. 2022. [Online]. Available:
  \url{https://www.sciencedirect.com/science/article/pii/S2405896322006449}
\BIBentrySTDinterwordspacing

\bibitem{wang_unifying_2020}
W.~Wang, D.~Nešić, R.~Postoyan, and W.~P. M.~H. Heemels, ``A unifying
  event-triggered control framework based on a hybrid small-gain theorem,'' in
  \emph{2020 59th {IEEE} {Conference} on {Decision} and {Control} ({CDC})},
  Dec. 2020, pp. 4979--4984, iSSN: 2576-2370.

\bibitem{hermann_nonlinear_1977}
R.~Hermann and A.~Krener, ``Nonlinear controllability and observability,''
  \emph{IEEE Transactions on Automatic Control}, vol.~22, no.~5, pp. 728--740,
  1977.

\bibitem{liberzon_lyapunov-based_2014}
D.~Liberzon, D.~Nešić, and A.~R. Teel, ``Lyapunov-{Based} {Small}-{Gain}
  {Theorems} for {Hybrid} {Systems},'' \emph{IEEE Transactions on Automatic
  Control}, vol.~59, no.~6, pp. 1395--1410, Jun. 2014, conference Name: IEEE
  Transactions on Automatic Control.

\bibitem{goebel_hybrid_2012}
R.~Goebel, R.~G. Sanfelice, and A.~R. Teel,
  \emph{\BIBforeignlanguage{en}{Hybrid {Dynamical} {Systems}: {Modeling},
  {Stability}, and {Robustness}}}.\hskip 1em plus 0.5em minus 0.4em\relax
  Princeton University Press, Mar. 2012, google-Books-ID: L8qqFWBt3L8C.

\end{thebibliography}

\end{document}